\newcommand{\yibo}[2]{\textcolor{blue}{#1}\todo[fancyline,color=blue!40]{\textbf{Yibo}: #2}}
\patchcmd{\epigraph}{\@epitext{#1}}{\itshape\@epitext{#1}}{}{}
\renewcommand\footnotetextcopyrightpermission[1]{} 
\definecolor{mygreen}{rgb}{0,0.6,0}
\newcounter{theorem_counter}
\newtheorem{lemma}{Lemma}
\newcounter{lemma_counter}
\begin{document}

\title[Pipeflow]{Pipeflow: An Efficient Task-Parallel Pipeline Programming Framework using Modern C++}

\author{Cheng-Hsiang Chiu}
\affiliation{
  \department{Dept. of ECE, University of Utah}              
}
\email{u1305418@utah.edu}          

\author{Tsung-Wei Huang}
\affiliation{
  \department{Dept. of ECE, University of Utah}              
}
\email{twh760812@gmail.com}          

\author{Zizheng Guo}
\affiliation{
  \department{Dept. of CS, Peking University}              
}
\email{gzz@pku.edu.cn}          

\author{Yibo Lin}
\affiliation{
  \department{Dept. of CS, Peking University}              
}
\email{yibolin@pku.edu.cn}

\begin{abstract}
Pipeline is a fundamental parallel programming pattern.
Mainstream pipeline programming frameworks count on data abstractions
to perform pipeline scheduling.
This design is convenient for data-centric pipeline applications
but inefficient for algorithms that only exploit task parallelism in pipeline.
As a result,
we introduce a new task-parallel pipeline programming 
framework called \textit{Pipeflow}.
Pipeflow does not design yet another data abstraction but focuses on the pipeline scheduling itself,
enabling more efficient implementation of task-parallel pipeline algorithms than existing frameworks.
We have evaluated Pipeflow on both micro-benchmarks and real-world applications.
As an example,
Pipeflow outperforms oneTBB 24\% and 10\% faster in a VLSI placement and 
a timing analysis workloads that adopt pipeline parallelism to speed up runtimes,
respectively.
\end{abstract}




\maketitle

\section{Introduction}

\paragraph{Motivation:} Pipeline is a fundamental parallel pattern to model parallel
executions through a linear chain of stages.
Each stage processes a \textit{data token} after the previous stage,
applies an abstract function to that data token, 
and then resolves the dependency for the next stage.
Multiple data tokens can be processed simultaneously across different stages
whenever dependencies are met.
For example, in circuit simulation, 
some operations on a gate (e.g., NAND, OR, AND) do not depend on other gates
and thus can be done at multiple logic levels simultaneously,
while operations at the same levels require processing prior levels first~\cite{Huang_21_01}.
As modern computing applications continue to adopt pipeline parallelism in various forms,
there is always a need for new pipeline programming frameworks to streamline
the implementation complexity of pipeline algorithms.

\paragraph{Limitation of state-of-art approaches:} Recent years have seen much research on
pipeline programming frameworks to assist developers in implementing pipeline algorithms
without worrying about scheduling details.
Some famous frameworks are oneTBB~\cite{TBB}, FastFlow~\cite{FastFlow}, 
GrPPI~\cite{GrPPI}, Cilk-P~\cite{CilkP}, SPar~\cite{SPar}, and HPX-pipeline~\cite{HPX}.
While each of these frameworks has its pros and cons,
a common design philosophy is 
to achieve transparent pipeline scheduling using 
\textit{data abstractions}.
This design is convenient for data-centric pipeline applications,
but it also brings three limitations:
1) Users are forced to design their pipeline algorithms in the center of data.
As we shall give a concrete example,
many applications exhibit pipeline parallelism among \textit{tasks} rather than data.
2) Under this circumstance, users need to sort out a data mapping strategy
between applications and frameworks
to perform pipeline scheduling, although such mapping is totally redundant.
3) Existing frameworks have very limited composability with other types of parallelism, 
such as task graphs which have become
essential to many irregular parallel algorithms.

\paragraph{Key insights and contributions:} After years of research, 
we have arrived at the key conclusion that 
\textit{data abstraction and task scheduling should be decoupled from each other 
in programming pipeline parallelism}.
Consequently, 
we introduce in this paper \textit{Pipeflow}, a new task-parallel pipeline programming framework
to overcome the limitations of existing ones.
We establish Pipeflow atop the open-source task graph programming system, 
\textit{Taskflow}, developed by Huang et al.~\cite{Huang_22_01},
to leverage its powerful tasking infrastructure.
As Taskflow is being used by several important research projects~\cite{OpenRoad, Magical, DREAMPlace},
building Pipeflow on top will not only benefit existing users but also
allow us to gain timely feedback from  the Taskflow community.
We summarize our contributions as follows:

\begin{itemize}[leftmargin=*]\itemsep=2pt

\item \textbf{Programming Model.} We have introduced a new C++ programming model 
for developers to create a \textit{pipeline scheduling framework}.
Unlike existing models,
we do not provide yet another data abstraction
but a flexible framework for users to fully control their application data atop 
a task-parallel pipeline scheduling framework.

\item \textbf{Task Composition.} We have introduced a composable interface to
enable seamless integration of Pipeflow into Taskflow.
Users can combine pipeline tasks with all existing task types of Taskflow 
to express a large parallel workload in a single end-to-end task graph.

\item \textbf{Scheduling Algorithm.} We have introduced a lightweight scheduling
algorithm to schedule stage tasks across parallel lines.
Our algorithm formulates the pipeline scheduling into a task graph and thus 
can efficiently solve the scheduling problem with dynamic load balancing
using Taskflow's work-stealing runtime.

\end{itemize}

\paragraph{Experimental methodology and artifact availability:}
We have evaluated Pipeflow on both micro-benchmarks and real-world applications.
As an example, 
Pipeflow outperforms oneTBB 24\% and 10\% faster in a
VLSI placement and a timing analysis workloads that adopt
pipeline parallelism to speed up runtimes, respectively.
Pipeflow is open-source and available in Taskflow as an algorithm module~\cite{Taskflow}.

\paragraph{Limitations of the proposed approach:}
Like all programming frameworks,
Pipeflow is not perfect.
Specifically, our design choice sacrifices the expressiveness 
for data-parallel pipeline applications, 
as users need to explicitly manage data storage using Pipeflow's runtime methods.
Yet, our experience leads us to believe that 
this challenge can be mitigated by deriving an application-dependent data abstraction
from our pipeline programming interface.

\section{Background}
\label{sec::background}

We first review mainstream pipeline models and 
detail the motivation of Pipeflow.
We then argue that a new task-parallel pipeline programming model is needed for
many important industrial and research areas, e.g., circuit design.

\subsection{Pipeline Basics}

Pipeline parallelism is commonly used to parallelize various applications,
such as stream processing, video processing, and dataflow systems.
These applications exhibit parallelism in the form of a \textit{linear pipeline}, 
where a linear sequence of abstraction functions, 
namely \textit{stages}, $F = \langle f_1, f_2, \cdots, f_j \rangle$,  
is applied to an input sequence of data tokens, $D = \langle d_1, d_2, \cdots, d_i \rangle$.
A linear pipeline can be thought of a loop over the data tokens of $D$.
Each iteration $i$ processes an input token $d_i$ by
applying the stage functions $F$ to $d_i$ in order.
Depending on the number of \textit{parallel lines},
$L = \langle l_1, l_2, \cdots, l_k \rangle$,
to process data tokens,
parallelism arises when iterations overlap in time.
For instance, the execution of token $d_i$ at stage $f_j$ of line $l_k$, denoted
as $f^k_j$($d_i$), can overlap with $f^{k+1}_{j-1}$($d_{i+1}$).
A stage can be a \textit{parallel} type or a \textit{serial} type to specify whether
$f^k_j$($d_i$) can overlap with $f^{k+1}_j$($d_{i+1}$) or not.
Figure \ref{fig::pipeline-graph} shows the dependency diagram of a 3-stage (serial-parallel-serial) pipeline.

\begin{figure}[!h]
  \centering
 \centerline{\includegraphics[width=.9\columnwidth]{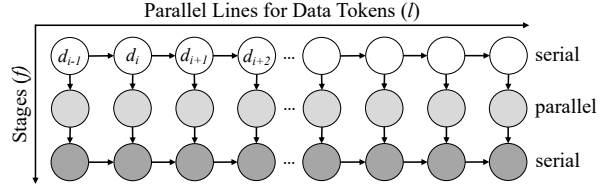}}
  \caption{
  Dependency diagram of a 3-stage (serial-parallel-serial) pipeline.
  Each node represents a task that applies a stage function to a token.
  Each edge represents a dependency between two tasks.
   }
  \label{fig::pipeline-graph}
\end{figure}

Mainstream pipeline programming libraries employ \textit{data-centric} models.
Users declare input and output data types for each stage
using library-specific data abstractions (e.g., template instantiation).
Taking oneTBB~\cite{TBB} for example, 
Listing \ref{listing::oneTBB_parallel_pipeline} implements Figure \ref{fig::pipeline-graph}
using four parallel lines and a series of callable objects called \textit{filter},
where each filter receives an input data, 
performs work on that data, and then produces a result for the next filter.
To support arbitrary application data types,
libraries typically leverage dynamic polymorphism
to allocate and convert data from a generic type (e.g., \texttt{void*}, \texttt{std::any}) to 
an application type.
To further minimize the allocation cost,
some libraries, such as oneTBB~\cite{TBB}, have implemented specialized object allocators
and buffer structures to handle temporary results between stages.

\begin{lstlisting}[language=C++,label=listing::oneTBB_parallel_pipeline,caption={oneTBB code of Figure \ref{fig::pipeline-graph}, assuming a \texttt{void}-\texttt{float}-\texttt{string}-\texttt{void} data transformation.}]
tbb::parallel_pipeline(4,  // four parallel lines
  tbb::make_filter<void,float>(
    tbb::filter_mode::serial_in_order,
    [&](tbb::flow_control& fc)-> float {
      if( data.ready() ) {
        return data.get();
      } else {
        fc.stop();
        return 0.0f;  // dummy data
      }
    }
  ) &
  tbb::make_filter<float, std::string>(
    tbb::filter_mode::parallel,
    [&](float p) { return make_string(p); }
  ) &
  tbb::make_filter<std::string,void>(
    tbb::filter_mode::serial_in_order,
    [&](std::string x) { std::cout << x; }
  )
);
\end{lstlisting}

%


\subsection{Pipeline Parallelism in CAD Algorithms}

Pipeflow is motivated by our research projects on 
developing parallel timing analysis algorithms for very large
scale integration (VLSI) computer-aided design (CAD).
Timing analysis is a critical step in the overall CAD flow
because it validates the timing performance of a digital circuit.
As design complexity continues to grow exponentially, 
the need to efficiently analyze the timing of large designs 
has become the major bottleneck to the design closure flow. 
For instance, generating a comprehensive timing report 
(e.g., pessimism removal, hundreds of corners, etc.) 
for a multi-million-gate design can take several hours to finish~\cite{ABK_18_01}.
To reduce the long analysis runtime,
recent years have seen increasing adoptions of manycore parallelism by 
new timing analysis algorithms~\cite{OpenRoad}.

\begin{figure}[!h]
  \centering
 \centerline{\includegraphics[width=1.\columnwidth]{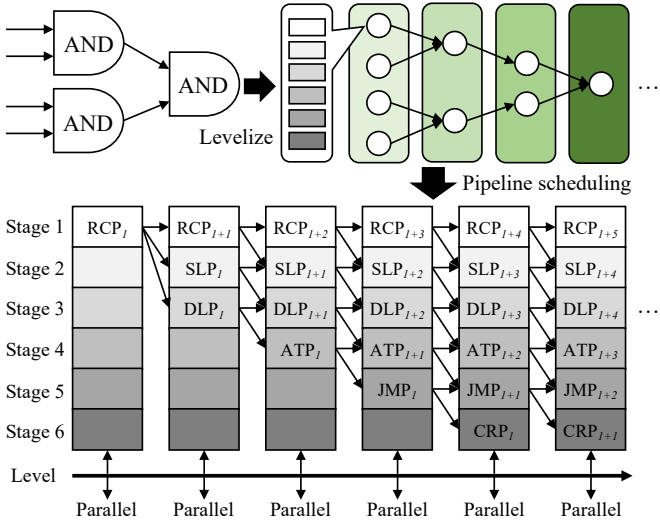}}
  \caption{
  Parallel timing propagations using pipeline~\cite{Huang_21_01}.
  Linearly dependent timing data (e.g., slew, delay, arrival time) is updated across
  graph nodes in a pipeline fashion.
   }
  \label{fig::pipeline_sta}
\end{figure}

The most widely used strategy, including commercial timers, 
to parallelize timing analysis is \textit{pipeline}.
Figure \ref{fig::pipeline_sta} illustrates
this strategy using forward timing propagation as an example~\cite{Huang_21_01}.
The circuit graph is first \textit{levelized} into a level list using topological sort.
Nodes at the same level are independent of each other and can run in parallel.
Each node runs a sequence of \textit{linearly dependent} propagation tasks,
including parasitics (RCP), slew (SLP), delay (DLP), arrival time (ATP),
jump points (JMP), and common path pessimism reduction (CRP)
to update its timing data from a custom circuit graph data structure.
Different propagation tasks can overlap across different levels using pipeline parallelism.

In fact, this type of \textit{task-parallel} pipeline strategy is ubiquitous in many
parallel CAD algorithms, such as logic simulation and physical design,
because computations frequently depend on circuit networks.
We have observed three important properties that make
mainstream pipeline programming frameworks fall short of our need:
1) Unlike the typical data-parallel pipeline, 
the pipeline parallelism in many CAD algorithms is driven by \textit{tasks} rather than data.
2) Data is not directly involved in the pipeline but the \textit{graph data structure}
crafted by a custom algorithm.
3) From user's standpoint, the real need is a \textit{pipeline scheduling framework}
to help schedule and run tasks on input tokens across parallel lines,
while leaving data management completely to applications;
in our experience, users disfavor another library data abstraction to perform pipeline scheduling,
as it often incurs development inconvenience and unnecessary data conversion overheads.


\section{Pipeflow}
\label{sec::pipeflow}

Inspired by the need of parallel CAD algorithms,
Pipeflow introduces a new task-parallel pipeline programming model
for users to create a pipeline scheduling framework
without data abstraction.
We establish Pipeflow atop the open-source parallel task graph programming system, 
Taskflow~\cite{Huang_22_01},
because it has been successfully adopted by many important CAD projects
under the DARPA IDEA program~\cite{OpenRoad, Magical, DREAMPlace}.
In this section, we will first give a brief introduction about Taskflow and then dive
into the technical details of Pipeflow.

\subsection{State of the Art: Taskflow}
\label{sec::state_of_the_art_taskflow}

Taskflow is a general-purpose parallel and heterogeneous programming system using
modern C++~\cite{Huang_22_01}.
Taskflow introduces a new \textit{control taskflow graph} (CTFG) model
that enables end-to-end implementation of task graph parallelism
coupled with in-graph control flow.
A CTFG consists of several task types, such as static task, dynamic task,
condition task, module task, runtime task, and so on.
Figure \ref{fig::conditional-tasking-do-while} shows a CTFG
of iterative tasking,
implemented in Listing \ref{listing::conditional-tasking-do-while}. 
The loop continuation condition is implemented by a single condition task, 
\texttt{cond},
that precedes two static tasks, \texttt{body} and \texttt{done}.
When \texttt{cond} returns \texttt{0}, the execution loops back to \texttt{body}.
When \texttt{cond} returns \texttt{1}, the execution moves onto \texttt{done}
and stops.
This example uses four tasks to implement a tasking loop of
100 iterations.

\begin{figure}[!h]
  \centering
  \centerline{\includegraphics[width=.9\columnwidth]{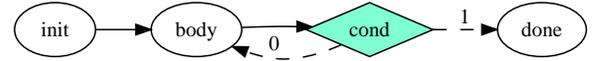}}
  \caption{A Taskflow graph of iterative control flow using one condition task and three static tasks.}
  \label{fig::conditional-tasking-do-while}
\end{figure}

\begin{lstlisting}[language=C++,label=listing::conditional-tasking-do-while,caption={Taskflow program of Figure \ref{fig::conditional-tasking-do-while}.}]
tf::Taskflow taskflow;
tf::Executor executor;
int i;
auto [init, body, cond, done] = taskflow.emplace(
  [&](){ i=0; },
  [&](){ i++; },
  [&](){ return i<100 ? 0 : 1; },
  [&](){ std::cout << "done"; }
);
init.precede(body);
body.precede(cond);
cond.precede(body, done);
executor.run(taskflow).wait();
\end{lstlisting}

Another powerful feature of Taskflow is \textit{composable tasking}.
Composable tasking enables users to define task hierarchies and compose large task
graphs from modular and reusable algorithm blocks that are easier to optimize.
Figure \ref{fig::ComposableTasking} gives an example of 
a Taskflow graph using composition.
The top-level taskflow defines one static task \texttt{C} that runs before
a dynamic task \texttt{D} that spawns two dependent tasks \texttt{D1} and \texttt{D2}.
Task \texttt{D} precedes a module task \texttt{E} composed of a taskflow of 
two dependent tasks \texttt{A} and \texttt{B}.
Listing \ref{listing::ComposableTasking} shows the 
Taskflow code of Figure \ref{fig::ComposableTasking}.
It declares two taskflows, \texttt{taskflow1} and \texttt{taskflow2}.
The second taskflow defines a \texttt{module} task that is composed of the first taskflow,
preceded by task \texttt{D}.
A module task does not own the taskflow but maintains a soft mapping to the taskflow. 
Users can create multiple module tasks from the same taskflow,
but they must not run concurrently.

\begin{figure}[!h]
  \centering
  \centerline{\includegraphics[width=.8\columnwidth]{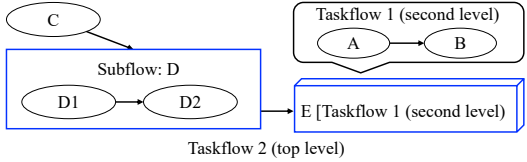}}
  \caption{An example of taskflow composition.}
  \label{fig::ComposableTasking}
\end{figure}

\begin{lstlisting}[language=C++,label=listing::ComposableTasking,caption={Taskflow code of Figure \ref{fig::ComposableTasking}.}]
tf::Taskflow taskflow1, taskflow2;
auto [A, B] = taskflow1.emplace(
  [] () { std::cout << "TaskA"; },
  [] () { std::cout << "TaskB"; }
);  
auto [C, D] = taskflow2.emplace(
  [] () { std::cout << "TaskC"; },
  [] (tf::Subflow& sf) { 
    std::cout << "TaskD"; 
    auto [D1, D2] = sf.emplace(
      [] () { std::cout << "D1"; },
      [] () { std::cout << "D2"; }
    );  
    D1.precede(D2);
  }   
);  
auto E = taskflow2.composed_of(taskflow1);
A.precede(B);
C.precede(D);
D.precede(E);
\end{lstlisting}

\paragraph{Integration with Taskflow:} 
Pipeflow leverages conditional tasking and composable tasking to 
implement the pipeline algorithm in a \textit{module task}.
Unlike existing pipeline programming frameworks that often operate 
in a standalone programming environment,
Pipeflow is designed to work seamlessly with Taskflow.
We make this architecture-level decision for three reasons:
First, the tasking patterns of parallel CAD algorithms are massive and \textit{irregular}.
Pipeline is just one part and needs to work with other tasks, 
such as graph traversal and control flow,
to compose the whole application algorithm.
Second, integrating Pipeflow into Taskflow
enables a unified scheduling runtime 
with dynamic load balancing and improved inter-operability with other Taskflow tasks.
Third,
from the ease of use standpoint,
existing Taskflow users need not to learn a different set of application programming interface
(API) but the scheduling concept of Pipeflow to implement pipeline algorithms.
While Pipeflow is primarily designed as an algorithm module of Taskflow,
we believe many of our ideas are applicable to other 
task-based parallel programming frameworks.

\subsection{Programming Model}
\label{sec::pipeflow_programming_model}

Pipeflow leverages modern C++ and template techniques to strike a balance between
expressiveness and generality.
Listing \ref{listing::Taskflow-parallel-pipeline} shows the Pipeflow counterpart of 
the oneTBB code in Listing \ref{listing::oneTBB_parallel_pipeline}
that implements the pipeline in Figure \ref{fig::pipeline-graph}.
There are three steps to create a Pipeflow application,
1) define the pipeline structure using template instantiation,
2) define the data storage, if needed, and
3) define the pipeline task using taskflow composition.
Users define the number of parallel lines and the abstract function of each stage 
in a \texttt{tf::Pipeline} object.
For each stage, users define the stage type and a pipe callable using \texttt{tf::Pipe}.
A stage can be either a serial type (\texttt{tf::PipeType::SERIAL}) or 
a parallel type (\texttt{tf::PipeType::PARALLEL}).
The pipe callable takes an argument of \texttt{tf::Pipeflow} type which is created
by the scheduler at runtime.
A \texttt{tf::Pipeflow} object represents a \textit{scheduling token} and contains several
extensible methods for users to query the runtime statistics of that token,
including the line, pipe, and token numbers.
In Pipeflow, pipe and stage are interchangeable.

\begin{lstlisting}[language=C++,label=listing::Taskflow-parallel-pipeline,caption={Pipeflow code of Figure \ref{fig::pipeline-graph}.}]
tf::Taskflow taskflow;
tf::Executor executor;
const size_t num_lines  = 4;
std::variant<float, std::string> dtype;
std::array<dtype, num_lines> buf;
tf::Pipeline pl(num_lines,
  tf::Pipe{tf::PipeType::SERIAL,
    [&](tf::Pipeflow& pf) {
      if ( !data.ready() ) {
        pf.stop();
      } else {
        buf[pf.line()] = data.get();
      }
    }
  },
  tf::Pipe{tf::PipeType::PARALLEL,
    [&](tf::Pipeflow& pf) {
      buf[pf.line()] = 
      make_string(std::get<0>(buf[pf.line()]));
    }
  },
  tf::Pipe{tf::PipeType::SERIAL,
    [&](tf::Pipeflow& pf) {
      std::cout << std::get<1>(buf[pf.line()]);
    }
  }
);
auto pipeline = taskflow.composed_of(pl);
executor.run(taskflow).wait();
\end{lstlisting}

Pipeflow does not have any data abstraction but gives applications full control over 
data management.
In our example, 
since the first and the second pipes generate \texttt{float} and \texttt{std::string} outputs, 
respectively,
we create a one-dimensional (1D) array, \texttt{buf}, to store data in a uniform storage
using \texttt{std::variant<float, std::string>}.
The dimension of the array is equal to the number of parallel lines,
as Pipeflow schedules only one token per line.
Each entry \texttt{buf[i]} stores the data that is being processed at line $i$,
which can be retrieved by \texttt{tf::Pipeflow::line}.
This organization is very space-efficient because we use only
1D array to represent data processing in a two-dimensional (2D) scheduling map.
Additionally,
by delegating data management to applications,
we can avoid dynamic data conversion between the library and the application,
which typically counts on virtual function calls
to convert a generic type (e.g., \texttt{void*}, \texttt{std::any}) to 
an arbitrary user type~\cite{TBB, FastFlow}.

Based on the above pipeline structure and data layout,
we instantiate a \texttt{tf::Pipeline} object, \texttt{pl}.
This template-based design enables the compiler to optimize
each pipe type, such as using fixed-layout
functor to store the callable and its captured data.
Finally, we create a pipeline module task \texttt{pipeline} with \texttt{pl} 
using the taskflow composition
method \texttt{composed\_of} and submit this taskflow to an executor to run the pipeline.

\begin{lstlisting}[language=C++,label=listing::Taskflow-parallel-scalable-pipeline,caption={Scalable pipeline model in Pipeflow to accept variable assignments of pipes.}]
using P =
  tf::Pipe<std::function<void(tf::Pipeflow&)>>;
std::vector<P> p(6, create_pipe());  // pipes
tf::ScalablePipeline pl(4, p.begin(), p.end());
taskflow.composed_of(pl);
executor.run(taskflow).wait();
p.resize(3);
pl.reset(p.begin(), p.end());
executor.run(taskflow).wait();
\end{lstlisting}

\texttt{tf::Pipeline} requires instantiation of all pipes at the construction time.
While this design gives compilers freedom to optimize the layout of each pipe type,
it prevents applications from varying the pipeline structure at runtime;
for instance, the number of pipes might depend on the problem size, which can be runtime variables.
To overcome this limitation,
Pipeflow provides a scalable alternative, \texttt{tf::ScalablePipeline}, to allow
variable assignments of pipes using range iterators.
In Listing \ref{listing::Taskflow-parallel-scalable-pipeline},
we create a scalable pipeline, \texttt{pl}, from a vector of six pipes.
After the first run,
we reset \texttt{pl} to another range of three pipes for the next run.
A scalable pipeline is thus more flexible 
for applications to create pipeline scheduling framework
with dynamic structures.

\subsection{Pipeline Task Composition}
\label{sec::pipeline_task_composition}

A key advantage of Pipeflow is its composability with Taskflow.
By encapsulating a pipeline in a module task,
we enable seamless integration with all existing task types in Taskflow.
This result largely facilitates the implementation of complex 
pipeline applications that require intensive interaction
with different types of task parallelism. 
Figure \ref{fig::pipeline-condition} shows a Taskflow graph that
emulates a data streaming application using a pipeline module task
and a condition task.
The condition task is used to decide if the pipeline needs to be run again
depending on the application control flow. 
Listing \ref{listing::pipeline-condition} shows the Taskflow code
of Figure \ref{fig::pipeline-condition}, using the pipeline task
in Listing \ref{listing::Taskflow-parallel-pipeline}.
When the condition task \texttt{cond} returns \texttt{0},
it informs the scheduler to rerun the pipeline task \texttt{pl},
or proceeds to \texttt{done} to stop the program otherwise.


\begin{figure}[!h]
  \centering
  \centerline{\includegraphics[width=.8\columnwidth]{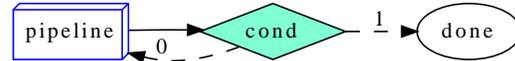}}
  \caption{A Taskflow graph of an iterative streaming application using one pipeline module task,
  one condition task, and one static task.}
  \label{fig::pipeline-condition}
\end{figure}

\begin{lstlisting}[language=C++,label=listing::pipeline-condition,caption={Taskflow code of Figure \ref{fig::pipeline-condition} using the pipeline of Listing \ref{listing::Taskflow-parallel-pipeline}.}]
auto cond = taskflow.emplace([&](){
  if ( data.ready() ) {
    std::cout << "rerun the pipeline"; 
    return 0;
  } else {
    return 1;
  }
});
auto done = taskflow.emplace([&](){ 
  std::cout << "stop"; 
});
pipeline.precede(cond);
cond.precede(pipeline, done);
\end{lstlisting}

Figure \ref{fig::taskflow-processing-pipeline} demonstrates another common application
that embeds task graph parallelism inside a pipeline.
The pipeline consists of three serial stages and four parallel lines.
Each scheduled token runs a taskflow that implements a stage algorithm in the pipeline.
The three taskflows are self-explanatory.
Different taskflows can overlap across different lines, but
only one taskflow runs on the same stage due to the serial type.
Listing \ref{listing::taskflow-processing-pipeline} 
implements Figure \ref{fig::taskflow-processing-pipeline}. 
We create a 1D array, \texttt{buf}, to store the three taskflows (defined elsewhere). 
In each stage, we obtain its taskflow at \texttt{buf[pf.pipe()]},
submit it to the executor,
and wait until the execution finishes.
As we need only four tokens,
the first pipe stops the scheduler at the fifth.

\begin{figure}[!h]
  \centering
  \centerline{\includegraphics[width=.9\columnwidth]{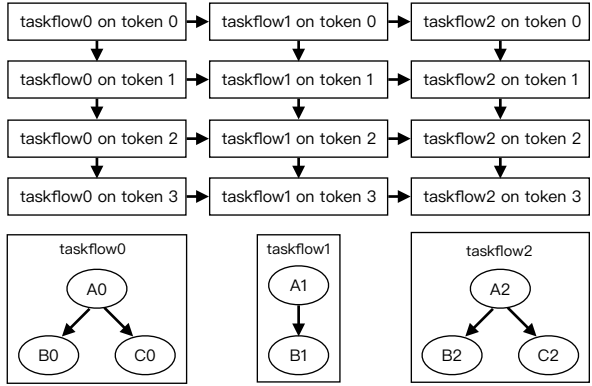}}
  \caption{A pipeline of embedded taskflows.
  Each taskflow implements a parallel algorithm at a stage in the pipeline.}
  \label{fig::taskflow-processing-pipeline}
\end{figure}

\begin{lstlisting}[language=C++,label=listing::taskflow-processing-pipeline,caption={Pipeflow and Taskflow code of Figure \ref{fig::taskflow-processing-pipeline}.}]
tf::Taskflow taskflow;
tf::Executor executor;
const size_t num_lines = 4;
const size_t num_pipes = 3;
std::array<tf::Taskflow, num_pipes> buf;
tf::Pipeline pl(num_lines,
  tf::Pipe{tf::PipeType::SERIAL, 
    [&](tf::Pipeflow& pf) {
      if (pf.token() == 4) {
        pf.stop();
        return;
      }
      executor.run(buf[pf.pipe()]).wait();
    }
  },
  tf::Pipe{tf::PipeType::SERIAL,
    [&](tf::Pipeflow& pf) {
      executor.run(buf[pf.pipe()]).wait();
    }
  },
  tf::Pipe{tf::PipeType::SERIAL,
    [&](tf::Pipeflow& pf) {
      executor.run(buf[pf.pipe()]).wait();
    }
  }
);
auto init = taskflow.emplace([](){ 
  std::cout << "init"; 
});
auto pipeline = taskflow.composed_of(pl);
auto done = taskflow.emplace([](){
  std::cout << "stop"; 
});
init.precede(pipeline);
pipeline.precede(done);
executor.run(taskflow).wait();
\end{lstlisting}

\subsection{Scheduling Algorithm}
\label{sec::scheduling_algorithm}

Pipeflow leverages Taskflow's work-stealing runtime to design an efficient
scheduling algorithm with dynamic load balancing.
As Pipeflow does not touch data abstraction,
we can simplify the pipeline scheduling problem
to deciding which task to run at which pipe and line.
Similar to oneTBB, the key idea of our scheduling algorithm is to enable only one scheduling token 
per line and process all tokens in a circular fashion across all parallel lines.
Based on the idea,
we formulate the pipeline scheduling into a lightweight Taskflow graph 
where 1) one task deals with a scheduling token per line
and 2) each task decides which task to run on its next line and pipe
using simple atomic operations.


Pipeflow creates a taskflow for each pipeline module task 
using one condition task and multiple runtime tasks one per line.
A runtime task is a task type in Taskflow for users
to interact with the executor~\cite{Huang_22_01},
such as scheduling a task in the graph.
The condition task decides which runtime task to run when the pipeline starts.
A runtime task deals with a scheduling token at a line and will 
create a pipeflow object (of type \texttt{tf::Pipeflow}) to pass 
to the pipe callable.
Figure \ref{fig::pipeflow-task-graph} shows the taskflow
of the pipeline module task in Listing \ref{listing::taskflow-processing-pipeline}.
As there are four parallel lines,
the task graph consists of one condition task, \texttt{cond}, and four runtime tasks,
\texttt{rt-0}, \texttt{rt-1}, \texttt{rt-2}, and \texttt{rt-3}.
Ultimately, only five tasks are used to run the pipeline, even though
the execution can involve many scheduling tokens.

 
\begin{figure}[!h]
  \centering
  \centerline{\includegraphics[width=.6\columnwidth]{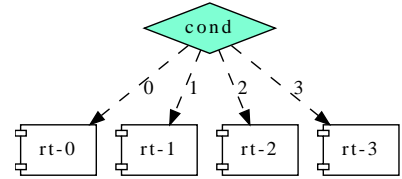}}
  \caption{The Taskflow graph of the pipeline module task in Listing \ref{listing::taskflow-processing-pipeline}.}
  \label{fig::pipeflow-task-graph}
\end{figure}

Algorithm \ref{alg::build_task_graph} implements the construction of the Taskflow graph
for a pipeline.
First, we define the condition task to return on the index of the next task
(line \ref{alg::build_task_graph::condition-task}).
Since the pipeline is running in a circular fashion,
the index is equal to the remainder of the total number of scheduled tokens
divided by the number of parallel lines.
Next, we define the runtime task using \texttt{build\_runtime\_task} for each line
(line \ref{alg::build_task_graph::runtime-task-l})
and specify the dependency between each runtime task and the condition task
(line \ref{alg::build_task_graph::dependency}).
 
\begin{algorithm}[!h]
 \SetKwInput{KwNumTokensGlobal}{global}
 \KwNumTokensGlobal{$num\_tokens$: the number of tokens}
 \SetKwInput{KwNumLinesGlobal}{global}
 \KwNumLinesGlobal{$num\_lines$: the number of lines}
 \SetKwInput{KwTasksGlobal}{global}
 \KwTasksGlobal{$tasks$: a vector of tasks}
 \SetKw{KwGoTo}{goto}
 \SetKw{KwAnd}{and}
 \SetKw{KwTrue}{true}
 \SetKw{KwFalse}{false}
 \SetKw{KwNIL}{NIL}

 $tasks[0] \leftarrow create\_condition\_task($\   \label{alg::build_task_graph::condition-task}
 $[]()\{ \Return \: num\_tokens \% num\_lines$\})\;
 \ForEach{line $l \in num\_lines$}{   \label{alg::build_task_graph::runtime-tasks}
   $build\_runtime\_task(l)$ \;       \label{alg::build_task_graph::runtime-task-l}
   $tasks[0]$.precede($tasks[l+1]$);  \label{alg::build_task_graph::dependency}
 }
 \caption{build\_task\_graph()}
 \label{alg::build_task_graph}
\end{algorithm}

\begin{algorithm}[!h]
 \SetKwInput{KwPipeflowObjectsGlobal}{global}
 \KwPipeflowObjectsGlobal{$pipeflows$: a vector of Pipeflow objects}
 \SetKwInput{KwLinesGlobal}{global}
 \KwLinesGlobal{$jcs$: a 2D array of join counters}
 \SetKwInput{KwNumTokensGlobal}{global}
 \KwNumTokensGlobal{$num\_tokens$: the number of tokens}
 \SetKwInput{KwNumLinesGlobal}{global}
 \KwNumLinesGlobal{$num\_lines$: the number of lines}
 \SetKwInput{KwNumPipesGlobal}{global}
 \KwNumPipesGlobal{$num\_pipes$: the number of pipes}
 \SetKwInput{KwTasksGlobal}{global}
 \KwTasksGlobal{$tasks$: a vector of tasks in Algorithm \ref{alg::build_task_graph}}
 \KwIn{$l$: an integer}
 \SetKw{KwGoTo}{goto}
 \SetKw{KwAnd}{and}
 \SetKw{KwTrue}{true}
 \SetKw{KwFalse}{false}
 \SetKw{KwNIL}{NIL}

 $pf \leftarrow pipeflows[l]$\;  \label{alg::build-runtime-task::pf}
 $AtomStore($\  \label{alg::build-runtime-task::goto}
 $lines[pf.line][pf.pipe].jc, jc\_of\_pf.type)$\;  
 \If{$pf.pipe == 0$} { \label{alg::build-runtime-task::first-stage}
   $pf.token \leftarrow num\_tokens$\; \label{alg::build-runtime-task::update-token}
   invoke\_pipe\_callable($pf.pipe$, $pf$)\; \label{alg::build-runtime-task::invoke-pipe-1}
   \If{$pf.stop ==$ True} {
     \Return;
   }                           \label{alg::build-runtime-task::pipe-stop-true}
   $Increment(num\_tokens)$\; \label{alg::build-runtime-task::add-token}
 }
 \If{$pf.pipe$ != $0$} {  \label{alg::build-runtime-task::not-first-stage}
   invoke\_pipe\_callable($pf.pipe$, $pf$)\;  \label{alg::build-runtime-task::invoke-pipe-2}
 }                            \label{alg::build-runtime-task::not-first-stage-end}

 $curr\_pipe \leftarrow pf.pipe$\;  \label{alg::build-runtime-task::jc-update}
 $next\_pipe \leftarrow (pf.pipe+1)\% num\_pipes$\;  
 $next\_line \leftarrow (pf.line+1)\% num\_lines$\;  
 $pf.pipe \leftarrow next\_pipe$\;
 $next\_tasks = \{\}$\;
 \If{$curr\_pipe$ is SERIAL \KwAnd 
 $AtomDec(lines[next\_line][curr\_pipe].jc) == 0$}{ \label{alg::build-runtime-task::atom-dec-downward}
   $next\_tasks$.insert($1$);    \label{alg::build-runtime-task::downward-dependency}
 }
 \If{$AtomDec(lines[pf.line][next\_pipe]) == 0$} {  \label{alg::build-runtime-task::atom-dec-right}
   $next\_tasks$.insert($0$);    \label{alg::build-runtime-task::right-dependency}
 }                             \label{alg::build-runtime-task::jc-update-end}

 \If{next\_task.$size == 2$}{  \label{alg::build-runtime-task::two-next-tasks}
   call\_scheduler($tasks[next\_line+1]$)\;  \label{alg::build-runtime-task::another-worker}
   \KwGoTo Line \ref{alg::build-runtime-task::goto}; \label{alg::build-runtime-task::current-worker} 
 }                              \label{alg::build-runtime-task::two-next-tasks-end}

 \If{next\_task.$size == 1$}{   \label{alg::build-runtime-task::one-next-task}
  \If{next\_task$[0] == 1$}{
    $pf \leftarrow pipeflows[next\_line]$; \label{alg::build-runtime-task::move-to-next-line}
  }
  \KwGoTo Line \ref{alg::build-runtime-task::goto};  
 }                              \label{alg::build-runtime-task::one-next-task-end}
 \caption{build\_runtime\_task($l$)}
 \label{alg::build-runtime-task}
\end{algorithm}

Algorithm \ref{alg::build-runtime-task} implements \texttt{build\_runtime\_task}.
When a runtime task is scheduled, we need to know which stage at which line 
for the scheduling token to work.
We keep the line and stage information in a Pipeflow object.
Each runtime task owns a Pipeflow object \texttt{pf} of a specific line
(line \ref{alg::build-runtime-task::pf}). 
Once a scheduling token is done, there are two cases for its runtime task to proceed:
1) for a parallel type, the runtime task moves to the next stage at the same line;
2) for a serial type, the runtime task additionally checks if it can move to the next line.
To carry out such a dependency constraint,
each stage keeps a join counter of an \textit{atomic} integer to represent
its dependency value.
The values of a serial stage and a parallel stage can be up to 2 and 1, respectively.
We create a 2D array \texttt{jcs} to store the join counter
of each stage at each line.
Line \ref{alg::build-runtime-task::goto} initializes these join counters to either 2 or 1
based on the corresponding stage types
that are enumerated on integer constants, 2 (serial) and 1 (parallel).
At the first stage (line \ref{alg::build-runtime-task::first-stage}),
the Pipeflow object updates its token number (line \ref{alg::build-runtime-task::update-token})
and checks if the pipe callable requests to stop the pipeline
(lines \ref{alg::build-runtime-task::invoke-pipe-1}:\ref{alg::build-runtime-task::pipe-stop-true}).
If continued, we increment the number of scheduled tokens by one
(line \ref{alg::build-runtime-task::add-token}).
For other stages, we simply invoke the pipe callables
(lines \ref{alg::build-runtime-task::not-first-stage}:\ref{alg::build-runtime-task::not-first-stage-end}).


After the pipe callable returns, we update the join counters based on the stage type
and determine the next possible tasks to run
(lines \ref{alg::build-runtime-task::jc-update}:\ref{alg::build-runtime-task::jc-update-end}).
When the join counter of a stage becomes 0,
we bookmark this stage as a task to run next
(line \ref{alg::build-runtime-task::downward-dependency}
and line \ref{alg::build-runtime-task::right-dependency}).  
If two tasks exist (line \ref{alg::build-runtime-task::two-next-tasks}),
the current runtime task informs the scheduler to call a worker thread from the executor
to run the task at the next line (line \ref{alg::build-runtime-task::another-worker})
and reiterates itself on the next pipe (line \ref{alg::build-runtime-task::current-worker}).
The idea here is to facilitate data locality as applications tend to
deal with the next stage as soon as possible.
If there is only one task available,
the current runtime task directly runs the next task with the updated \texttt{pf} object
(lines \ref{alg::build-runtime-task::one-next-task}:\ref{alg::build-runtime-task::one-next-task-end}).

\begin{figure}[!h]
  \centering
  \centerline{\includegraphics[width=.9\columnwidth]{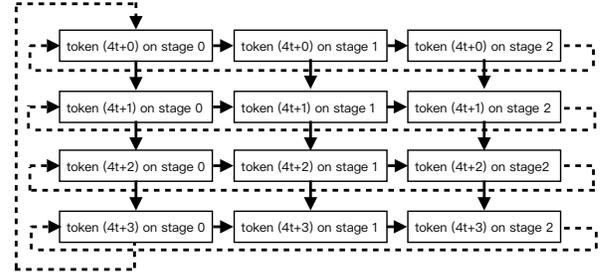}}
  \caption{The scheduling diagram of the pipeline in Listing \ref{listing::taskflow-processing-pipeline}.
  Each line runs one scheduling token. Multiple lines overlap tokens in a circular fashion.}
  \label{fig::pipeflow-scheduling-structure}
\end{figure}

Figure \ref{fig::pipeflow-scheduling-structure} illustrates our scheduling algorithm
using the pipeline
in Listing \ref{listing::taskflow-processing-pipeline}.
Since the pipeline runs in a circular fashion,
there are four dependencies (dashed edges) from the last stages to the first stages,
and one dashed edge from the first stage of the last line to the first stage of the first line. 
Each line runs only one scheduling token. Multiple lines overlap tokens in a circular fashion.
Compared to existing algorithms, such as oneTBB~\cite{TBB}, that
count on non-trivial synchronization between tasks and internal data buffers,
our algorithm focuses on the task parallelism itself and thus
largely reduces the scheduling complexity of pipeline using simple atomic operations.
We draw the following lemmas and sketch their proofs to justify 
the correctness of our scheduling algorithm:

\begin{lemma}
\label{lemma::one-runtime-one-callable}
Only one runtime task runs a pipe callable (
line \ref{alg::build-runtime-task::invoke-pipe-1} and line \ref{alg::build-runtime-task::invoke-pipe-2}
in Algorithm \ref{alg::build-runtime-task}) on a scheduling token.
\end{lemma}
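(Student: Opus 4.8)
The plan is to view each invocation of a pipe callable as the \emph{execution} of a node $(l,p)$ in the dependency graph and to prove that each node is \emph{claimed}, and hence executed, by at most one runtime task, relying on the atomicity of the join-counter decrements. First I would make the graph explicit: node $(l,p)$ has an incoming ``right'' edge from its predecessor on the same line and, when stage $p$ is serial, an additional ``downward'' edge from the same stage on the preceding line. These are exactly the two edges whose counters are decremented on lines \ref{alg::build-runtime-task::atom-dec-downward} and \ref{alg::build-runtime-task::atom-dec-right}, so the join counter of $(l,p)$ initialized on line \ref{alg::build-runtime-task::goto} equals its in-degree: $2$ for a serial stage and $1$ for a parallel stage. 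A pipe callable on line \ref{alg::build-runtime-task::invoke-pipe-1} or line \ref{alg::build-runtime-task::invoke-pipe-2} is run only after its node has been inserted into \texttt{next\_tasks}, which happens precisely when one of those \texttt{AtomDec} calls returns $0$.

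The heart of the argument is a uniqueness-of-the-last-decrementer claim. Since the in-degree-many predecessors of $(l,p)$ all decrement the single shared counter through the atomic operation \texttt{AtomDec}, the decrements are linearized and their returned values are pairwise distinct; in particular exactly one predecessor reads the value $0$. Only that predecessor inserts $(l,p)$ into \texttt{next\_tasks} and then either continues onto it via the \texttt{goto} (line \ref{alg::build-runtime-task::current-worker} or the one-task branch) or dispatches it to another worker through \texttt{call\_scheduler} (line \ref{alg::build-runtime-task::another-worker}). Hence at most one runtime task ever claims $(l,p)$, and therefore at most one runtime task invokes its callable.

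I would then dispose of the bootstrap case. The first stage of the initial token has no predecessor that has already run; it is instead triggered by the condition task of Algorithm \ref{alg::build_task_graph}, which returns a \emph{single} index \texttt{num\_tokens \% num\_lines} and thus dispatches exactly one runtime task when the pipeline is entered. The seed node is therefore also claimed only once, and the atomic argument governs every node reached afterward.

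The step I expect to be the main obstacle is not the counter bookkeeping but the cross-iteration synchronization introduced by the circular edges together with the in-place reset on line \ref{alg::build-runtime-task::goto}. Because a node's counter is reused across successive tokens on the same line, I must rule out an interleaving in which a predecessor belonging to the next token decrements $(l,p)$'s counter before $(l,p)$'s current execution has reset it, which could drive the counter to $0$ a second time within one token cycle. The resolution is structural: the circular edge from the last stage back to the first stage of each line, together with the cross-line serial ordering edges, serializes consecutive uses of the same node, so that a node cannot become runnable for the next token until it has executed---and hence reset its counter---for the current one. Turning this structural observation into a rigorous happens-before ordering is the crux; once it holds, the at-most-one conclusion follows token by token.
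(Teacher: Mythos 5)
Your proposal is correct and takes essentially the same route as the paper's own proof: the atomicity of the join-counter decrements guarantees that exactly one predecessor of a stage--line node observes the value zero (lines \ref{alg::build-runtime-task::atom-dec-downward} and \ref{alg::build-runtime-task::atom-dec-right} of Algorithm \ref{alg::build-runtime-task}), hence at most one runtime task claims that node and invokes its callable, with the serial/parallel distinction captured by the counter's initial value of $2$ versus $1$. You go slightly beyond the paper by treating the bootstrap case via the condition task and by flagging the cross-token reuse of counters after the reset on line \ref{alg::build-runtime-task::goto}---a genuine subtlety the paper's sketch does not mention---and the serialization through the circular back-edges that you outline is the right way to close that remaining step.
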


\begin{proof}
Assuming there are two runtime tasks
running the same pipe callable, this means one runtime task reiterates its execution from
the previous stage and the other runtime task comes from the previous line.
In a parallel stage, this is not possible as there is no dependency from the previous line;
only one runtime task decrements the join counter to 0
(line \ref{alg::build-runtime-task::atom-dec-right} in Algorithm \ref{alg::build-runtime-task}).
In a serial stage, this is also not possible because the dependency
is resolved using atomic operations; only one runtime task will acquire the zero value
of the join counter  
(line \ref{alg::build-runtime-task::atom-dec-downward} in Algorithm \ref{alg::build-runtime-task}).
\end{proof}

\begin{lemma}
\label{lemma::every-stage}
The scheduler does not miss any stage.
\end{lemma}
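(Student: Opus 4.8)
The plan is to pair Lemma~\ref{lemma::one-runtime-one-callable} with its converse: that lemma bounds each stage's execution from \emph{above} (at most once), so here I would bound it from \emph{below} (at least once), and the two together certify that every stage runs exactly once. I would model an execution as a directed acyclic graph whose nodes are the logical stages $(t,p)$---the processing of token $t$ at pipe $p$---and whose edges are the two dependencies the algorithm resolves: a \emph{rightward} edge $(t,p)\to(t,p{+}1)$ that advances a token to its next pipe on the same line (line~\ref{alg::build-runtime-task::atom-dec-right} of Algorithm~\ref{alg::build-runtime-task}), and, for every serial pipe, a \emph{downward} edge $(t,p)\to(t{+}1,p)$ that enforces in-order processing across lines (line~\ref{alg::build-runtime-task::atom-dec-downward}). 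Since $t$ and $p$ are non-decreasing along every edge, this graph is acyclic even though the underlying $(\mathit{line},\mathit{pipe})$ schedule wraps around circularly; a wraparound edge merely encodes the reuse of a line for its next token and always points from an earlier node to a later one.

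I would then induct on a topological order of this graph. For the base case, the initial stage $(0,0)$ is dispatched directly by the condition task of Algorithm~\ref{alg::build_task_graph}, which returns $num\_tokens \bmod num\_lines = 0$. For the inductive step, suppose every predecessor of a stage $s=(t,p)$ has already been scheduled and, by the induction hypothesis together with Lemma~\ref{lemma::one-runtime-one-callable}, has invoked its callable exactly once. On finishing, each such predecessor decrements $s$'s join counter exactly once (lines~\ref{alg::build-runtime-task::atom-dec-downward} and~\ref{alg::build-runtime-task::atom-dec-right}). That counter is reset to $2$ for a serial stage and to $1$ for a parallel stage (line~\ref{alg::build-runtime-task::goto}), which is precisely the in-degree of $s$---a rightward predecessor, plus a downward one exactly when $p$ is serial. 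Hence once all predecessors complete the counter reaches exactly zero, $s$ is recorded in $next\_tasks$ (line~\ref{alg::build-runtime-task::downward-dependency} or~\ref{alg::build-runtime-task::right-dependency}), and it is then genuinely dispatched---either handed to a fresh worker when two successors fire at once (line~\ref{alg::build-runtime-task::another-worker}) or picked up by the current task as it reiterates to its next pipe or steps onto the next line (lines~\ref{alg::build-runtime-task::current-worker} and~\ref{alg::build-runtime-task::move-to-next-line}). Thus $s$ is not missed.

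The delicate point, and the step I expect to demand the most care, is the interplay between this counter \emph{reset} and the circular reuse of each $(\mathit{line},\mathit{pipe})$ cell. Because one cell services many tokens, its atomic counter is re-initialized at the top of every activation (line~\ref{alg::build-runtime-task::goto}) rather than once globally, so I must show that the reset is ordered \emph{after} all decrements belonging to the current token and \emph{before} any decrement belonging to the next token to reach that cell. I would derive this happens-before relation from the dependency edges themselves: any predecessor that feeds the next token into a cell is, by construction, a topological successor of the current token at that same cell, so the current activation's reset necessarily precedes every decrement aimed at the next activation. The remaining effort is to discharge the boundary cases---pipe $0$, the first line, and the wraparound edges---whose ``missing'' predecessors are absorbed into smaller initial counter values (and, for the very first stage, supplied outright by the condition task); these cases only shift where decrements originate and never change the count, so each stage still accumulates exactly its reset value of decrements before firing, and completeness follows.
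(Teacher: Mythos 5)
Your proof is correct, and it rests on the same underlying mechanism as the paper's argument --- the join counter of each stage is initialized to its dependency count (2 for serial, 1 for parallel), predecessors atomically decrement it, and whichever runtime task drives it to zero is obligated to dispatch that stage, either by reiterating onto it or by handing it to another worker. Where you differ is in scope and rigor. The paper's proof is a local, single-scenario sketch: it examines only the case where a runtime task abandons the next stage on its own line in favor of the next line, and asserts that the other incoming task will ``eventually'' pick up the orphan. You instead build the full dependency DAG over logical stages $(t,p)$, observe that the reset value of each counter equals the in-degree of the corresponding node, and run a topological induction showing every node is dispatched --- which subsumes the paper's scenario as one case of the inductive step and, combined with Lemma~\ref{lemma::one-runtime-one-callable}, yields the stronger ``exactly once'' guarantee. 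You also surface and discharge a genuine subtlety the paper silently skips: because each $(\mathit{line},\mathit{pipe})$ cell and its counter are reused circularly across tokens, one must show the per-activation reset at line~\ref{alg::build-runtime-task::goto} cannot race with decrements intended for the next token; your observation that every such decrement originates from a topological successor of the current activation settles this cleanly. The only soft spot is the treatment of the very first wave of tokens, where you assume the missing predecessors are absorbed into smaller initial counter values --- the paper's pseudocode never shows that initialization, so your reading is plausible but not verifiable from the text; the paper's own proof does not address this boundary either. Overall your argument is a strictly more complete justification of the lemma than the one the paper gives.
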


\begin{proof}
We consider the situation where one runtime task moves to the next line
(line \ref{alg::build-runtime-task::move-to-next-line} in Algorithm \ref{alg::build-runtime-task})
instead of the next stage at the same line. 
Under this circumstance,
we need to make sure one runtime task will run that next stage.
Take Figure \ref{fig::pipeflow-scheduling-structure} for example,
suppose a runtime task finishes 
token \texttt{4t+1} at stage \texttt{1} and precedes to token \texttt{4t+2} on stage \texttt{1},
meaning that the join counter of token \texttt{4t+1} at stage \texttt{2} is not 0 yet.
Another runtime task that works on token \texttt{4t+0} at stage \texttt{2} will eventually
decrement the join counter to run it
(line \ref{alg::build-runtime-task::current-worker} in Algorithm \ref{alg::build-runtime-task})
or invoke another worker thread to
run it (line \ref{alg::build-runtime-task::another-worker} in Algorithm \ref{alg::build-runtime-task}).
\end{proof}

%

\section{Experimental Results}
\label{sec::experimental_results}

We evaluate the performance of Pipeflow on two fronts:
micro-benchmarks and two real-world industrial CAD applications.
We study the performance across runtime, memory (RSS), and throughput.
We do not use conventional pipeline benchmarks as their sizes are relatively small
compared to CAD (e.g., 6 pipes in ferret\cite{PaRSEC}).
The runtime difference between Pipeflow and the baseline is not obvious on small pipelines.
We compiled all programs using clang++ v10 with C++17 standard 
\texttt{-std=c++17} and optimization flag \texttt{-O3} enabled. 
We run all the experiments on a Ubuntu Linux 5.3.0-64-generic x86 machine
with 40 Intel Xeon CPU cores at 2.00 GHz and 256 GB RAM.
Each application thread corresponds to one CPU core.
All data is an average of five runs.

\subsection{Baseline}
\label{sec::baseline}

Given a large number of pipeline programming frameworks, it is infeasible to compare
Pipeflow with all of them. 
Each of them has its pros and cons and dominates certain applications.
Since Pipeflow is inspired by our CAD applications,
we select oneTBB Parallel Pipeline (v2021.5.0)~\cite{TBB} as our baseline,
which has been widely used in the CAD community.
We believe this selection is sufficient and fair to highlight the advantage of Pipeflow,
considering the similar tasking infrastructure between oneTBB and Pipeflow.
For oneTBB pipelines, we pass a nominal integer between successive stages (i.e., \texttt{tbb::filter}),
because oneTBB does not allow void type but
implements a specialized memory allocator to hold intermediate results returned by stages.

\subsection{Micro-benchmarks}
\label{sec::micro_benchmarks}

The purpose of micro-benchmarks is to measure the pure scheduling performance
of Pipeflow without much computation bias from the application.
We compare the runtime and memory between Pipeflow and oneTBB 
for completing pipelines of different numbers of serial stages, scheduling tokens, and threads.
We do not use parallel stages as the callable of a parallel pipe can be absorbed into the
previous serial pipe.
Each stage performs a nominal work of constant space and time complexity and 
forwards the scheduling token to the next stage.

\begin{figure}[!h]
  \centering
  \pgfplotsset{
    title style={font=\LARGE},
    label style={font=\LARGE},
  }
  \begin{tikzpicture}[scale=0.49]
    \begin{axis}[
      title=Runtime Improvement,
      xmode=log,
      log basis x={2},
      ylabel=Ratio (\%),
      xlabel=Number of Scheduling Tokens,
      legend pos=north west,
    ]
    \addplot+ table[x=size,y expr=100*(\thisrow{tbb_16t} - \thisrow{tf_16t})/\thisrow{tbb_16t},col sep=space]{Fig/micro_benchmark/data-scalability-runtime.txt};
    \end{axis}
  \end{tikzpicture}
  \begin{tikzpicture}[scale=0.49]
    \begin{axis}[
      title=Memory,
      xmode=log,
      log basis x={2},
      ylabel=Maximum RSS (MB),
      xlabel=Number of Scheduling Tokens,
      legend pos=north west,
      y filter/.code={\pgfmathparse{#1/1000}\pgfmathresult}
    ]
    \addplot+ table[x=size,y=tf_16t,col sep=space]{Fig/micro_benchmark/data-scalability-memory.txt};
    \addplot+ table[x=size,y=tbb_16t,col sep=space]{Fig/micro_benchmark/data-scalability-memory.txt};
    \legend{Pipeflow, oneTBB}
    \end{axis}
  \end{tikzpicture}
  \caption{Runtime (improvement of Pipeflow over oneTBB) and memory performance at different numbers of scheduling tokens
  using 16 threads, 80 parallel lines, and 80 serial stages.}
  \label{fig::micro-benchmark-data-scalability}
\end{figure}

Figure \ref{fig::micro-benchmark-data-scalability} draws the runtime improvement
of Pipeflow over oneTBB and memory data
under different numbers of scheduling tokens.
Here, we use 16 threads, which produce the best performance for both,
to run a pipeline of 80 serial stages and 80 parallel lines.
With more token numbers ($>$256),
Pipeflow is consistently better than oneTBB, despite the slight improvement
($<$1\%).
When the number of scheduling tokens is small, the variation is large
(e.g., up to 6\% improvement at 32 tokens).
This is because when oneTBB starts the pipeline, it requires expensive
set-up time on the data buffers, whereas Pipeflow can immediately
start the task scheduling.
Yet, as we increase the number of tokens, such cost can be amortized.
In terms of memory, oneTBB is always higher than Pipeflow
(e.g., 21\% at 65K tokens) 
since we do not manage any data buffers but focus on the task scheduling itself.

\begin{figure}[!h]
  \centering
  \pgfplotsset{
    title style={font=\LARGE},
    label style={font=\LARGE},
  }
  \begin{tikzpicture}[scale=0.49]
    \begin{axis}[
      title=Runtime Improvement,
      ylabel=Ratio (\%),
      xlabel=Number of Stages,
      legend pos=north west,
    ]
    \addplot+ table[x=pipe,y expr=100*(\thisrow{tbb_16t}-\thisrow{tf_16t})/\thisrow{tbb_16t},col sep=space]{Fig/micro_benchmark/pipe-scalability-runtime.txt};
    \end{axis}
  \end{tikzpicture}
  \begin{tikzpicture}[scale=0.49]
    \begin{axis}[
      title=Memory,
      ylabel=Maximum RSS (MB),
      xlabel=Number of Stages,
      legend pos=north west,
      y filter/.code={\pgfmathparse{#1/1000}\pgfmathresult}
    ]
    \addplot+ table[x=pipe,y=tf_16t,col sep=space]{Fig/micro_benchmark/pipe-scalability-memory.txt};
    \addplot+ table[x=pipe,y=tbb_16t,col sep=space]{Fig/micro_benchmark/pipe-scalability-memory.txt};
    \legend{Pipeflow, oneTBB}
    \end{axis}
  \end{tikzpicture}
  \caption{Runtime (improvement of Pipeflow over oneTBB) and memory performance at 
  different numbers of serial stages using 16 threads and 65K scheduling tokens.
  The number of parallel lines equals the number of the stages.}
  \label{fig::micro-benchmark-pipe-scalability}
\end{figure}

Figure \ref{fig::micro-benchmark-pipe-scalability} draws the runtime
and memory performance using 16 threads to schedule 65K tokens 
through different numbers of stages, 
where the number of parallel lines is equal to the stage count.
We observe that Pipeflow is consistently faster than oneTBB when
the number of stages is larger than 16, despite the difference being slight ($<$1\%).
At 8 stages, the available task parallelism is smaller than the given
thread parallelism, and oneTBB is faster in this scenario.
This is due to Taskflow's scheduling algorithm. Taskflow always keeps one thread
busy in stealing while there is an active worker.
When the available task parallelism is scarce,
the scheduling cost becomes expensive. 
Yet, this problem can be mitigated by users selecting the right line number
in their pipeline applications.
For instance, beyond 8 stages, Pipeflow starts to outperform oneTBB.
In terms of memory,
we can clearly see the difference between Pipeflow and oneTBB
(e.g., 19\% at 80 stages).
As we increase the number of stages,
oneTBB needs more space for internal data buffers, whereas
Pipeflow delegates the data management completely to applications.

\begin{figure}[!h]
  \centering
  \pgfplotsset{
    title style={font=\LARGE},
    label style={font=\LARGE},
  }
  \begin{tikzpicture}[scale=0.45]
    \begin{axis}[
      title=Runtime,
      axis y line*=left,
      ylabel=Runtime (s),
      xlabel=Number of Cores,
      legend style={at={(0.15,0.75)},anchor=south west},
      y filter/.code={\pgfmathparse{#1/1000}\pgfmathresult}
    ]
    \addplot table[x=thread,y=tf_80l,col sep=space]{Fig/micro_benchmark/thread-scalability-runtime.txt};
    \addplot table[x=thread,y=tbb_80l,col sep=space]{Fig/micro_benchmark/thread-scalability-runtime.txt};
    \legend{Pipeflow, oneTBB}
    \end{axis}
    \begin{axis}[
      axis y line*=right,
      ylabel=Improvement Ratio (\%),
      axis x line=none,
      ylabel style = {font=\LARGE,yshift=-9cm},
    ]
    \addplot [ybar,color=black] table[x=thread,y expr=100*(\thisrow{tbb_80l}-\thisrow{tf_80l})/\thisrow{tbb_80l},col sep=space]{Fig/micro_benchmark/thread-scalability-runtime.txt};
    \end{axis}
  \end{tikzpicture}
  \begin{tikzpicture}[scale=0.45]
    \begin{axis}[
      title=Memory,
      ylabel=Maximum RSS (MB),
      xlabel=Number of Cores,
      legend pos=north west,
      y filter/.code={\pgfmathparse{#1/1000}\pgfmathresult}
    ]
    \addplot+ table[x=thread,y=tf_80l,col sep=space]{Fig/micro_benchmark/thread-scalability-memory.txt};
    \addplot+ table[x=thread,y=tbb_80l,col sep=space]{Fig/micro_benchmark/thread-scalability-memory.txt};
    \legend{Pipeflow, oneTBB}
    \end{axis}
  \end{tikzpicture}
  \caption{Runtime and memory performance at differents number of threads running 65K
  tokens on a pipeline of 80 stages and 80 parallel lines. The bars illustrate the runtime improvement
  of Pipeline over oneTBB.}
  \label{fig::micro-benchmark-thread-scalability}
\end{figure}

Figure \ref{fig::micro-benchmark-thread-scalability} shows the runtime
and memory results at different numbers of threads to run 65K tokens 
on a pipeline of 80 stages and 80 parallel lines.
Both Pipeflow and oneTBB scale equally well as the number of cores increases.
At each point, we observe a small win of Pipeflow.
For instance, at 32 cores, Pipeflow is 4.3\% faster than oneTBB.
In terms of memory,
both Pipeflow and oneTBB use more memory with more threads.
However, there remains a clear gap between Pipeflow and oneTBB
(e.g., 22\% less by Pipeflow at 40 cores).


\begin{figure}[!h]
  \centering
  \pgfplotsset{
    title style={font=\LARGE},
    label style={font=\LARGE},
  }
  \begin{tikzpicture}[scale=0.49]
    \begin{axis}[
      title=Corun (8 Stages),
      ylabel=Throughput,
      xlabel=Number of Coruns,
      legend pos=north east,
    ]
    \addplot+ table[x=corun,y=tf,col sep=space]{Fig/micro_benchmark/throughput-8pipes-work.txt};
    \addplot+ table[x=corun,y=tbb,col sep=space]{Fig/micro_benchmark/throughput-8pipes-work.txt};
    \legend{Pipeflow, oneTBB}
    \end{axis}
  \end{tikzpicture}
  \begin{tikzpicture}[scale=0.49]
    \begin{axis}[
      title=Corun (80 Stages),
      ylabel=Throughput,
      xlabel=Number of Coruns,
      legend pos=north east,
    ]
    \addplot+ table[x=corun,y=tf,col sep=space]{Fig/micro_benchmark/throughput-80pipes-work.txt};
    \addplot+ table[x=corun,y=tbb,col sep=space]{Fig/micro_benchmark/throughput-80pipes-work.txt};
    \legend{Pipeflow, oneTBB}
    \end{axis}
  \end{tikzpicture}
  \caption{Throughput of corunning micro-benchmark programs with 8 and 80 stages.}
  \label{fig::micro-benchmark-throughput}
\end{figure}

Figure \ref{fig::micro-benchmark-throughput} compares the throughput by corunning the
same program up to 10 times. 
We use the \textit{weighted speedup} to measure the system throughput, which is the sum
of the individual speedup of each process over a baseline execution time~\cite{BWS}.
A throughput of one implies that the corun throughput is the same 
as if those processes run consecutively.
On the left, the pipeline has 8 stages and 8 parallel lines.
On the right, the pipeline has 80 stages and 80 parallel lines.
Both of them run 65K scheduling tokens using 40 threads.
The experiment emulates a server-like environment where different pipeline applications
compete for the same resources.
We can see that Pipeflow outperforms oneTBB in most coruns.
For example, at 10 coruns, Pipeflow is 1.1 $\times$ 
and 1.5$\times$ better than oneTBB 
with 8 and 80 stages, respectively.

\begin{figure}[!h]
  \centering
  \pgfplotsset{
    title style={font=\LARGE},
    label style={font=\LARGE},
  }
  \begin{tikzpicture}[scale=0.49]
    \begin{axis}[
      title=Runtime (16 Cores),
      ylabel=Runtime (s),
      xlabel=Number of Stages,
      legend pos=south east,
      y filter/.code={\pgfmathparse{#1/1000}\pgfmathresult}
    ]
    \addplot+ table[x=pipe,y=tf_16t_80l,col sep=space]{Fig/graph_pipeline/pipe-scalability-runtime.txt};
    \addplot+ table[x=pipe,y=tbb_16t_80l,col sep=space]{Fig/graph_pipeline/pipe-scalability-runtime.txt};
    \legend{Pipeflow, oneTBB}
    \end{axis}
  \end{tikzpicture}
  \begin{tikzpicture}[scale=0.49]
    \begin{axis}[
      title=Memory (16 Cores),
      ylabel=Maximum RSS (MB),
      xlabel=Number of Stages,
      legend pos=north west,
      ymax=196,
    ]
    \addplot+ table[x=pipe,y expr=\thisrow{tf_16t_80l}/1000,col sep=space]{Fig/graph_pipeline/pipe-scalability-memory.txt};
    \addplot+ table[x=pipe,y expr=\thisrow{tbb_16t_80l}/1000,col sep=space]{Fig/graph_pipeline/pipe-scalability-memory.txt};
    \legend{Pipeflow, oneTBB}
    \end{axis}
  \end{tikzpicture}
  \begin{tikzpicture}[scale=0.49]
    \begin{axis}[
      title=Runtime (32 Cores),
      ylabel=Runtime (s),
      xlabel=Number of Stages,
      legend pos=south east,
      y filter/.code={\pgfmathparse{#1/1000}\pgfmathresult}
    ]
    \addplot+ table[x=pipe,y=tf_32t_80l,col sep=space]{Fig/graph_pipeline/pipe-scalability-runtime.txt};
    \addplot+ table[x=pipe,y=tbb_32t_80l,col sep=space]{Fig/graph_pipeline/pipe-scalability-runtime.txt};
    \legend{Pipeflow, oneTBB}
    \end{axis}
  \end{tikzpicture}
  \begin{tikzpicture}[scale=0.49]
    \begin{axis}[
      title=Memory (32 Cores),
      ylabel=Maximum RSS (MB),
      xlabel=Number of Stages,
      legend pos=north west,
      y filter/.code={\pgfmathparse{#1/1000}\pgfmathresult}
    ]
    \addplot+ table[x=pipe,y=tf_32t_80l,col sep=space]{Fig/graph_pipeline/pipe-scalability-memory.txt};
    \addplot+ table[x=pipe,y=tbb_32t_80l,col sep=space]{Fig/graph_pipeline/pipe-scalability-memory.txt};
    \legend{Pipeflow, oneTBB}
    \end{axis}
  \end{tikzpicture}
  \begin{tikzpicture}[scale=0.49]
    \begin{axis}[
      title=Runtime (40 Cores),
      ylabel=Runtime (s),
      xlabel=Number of Stages,
      legend pos=south east,
      y filter/.code={\pgfmathparse{#1/1000}\pgfmathresult}
    ]
    \addplot+ table[x=pipe,y=tf_40t_80l,col sep=space]{Fig/graph_pipeline/pipe-scalability-runtime.txt};
    \addplot+ table[x=pipe,y=tbb_40t_80l,col sep=space]{Fig/graph_pipeline/pipe-scalability-runtime.txt};
    \legend{Pipeflow, oneTBB}
    \end{axis}
  \end{tikzpicture}
  \begin{tikzpicture}[scale=0.49]
    \begin{axis}[
      title=Memory (40 Cores),
      ylabel=Maximum RSS (MB),
      xlabel=Number of Stages,
      legend pos=north west,
      ymax=198,
      y filter/.code={\pgfmathparse{#1/1000}\pgfmathresult}
    ]
    \addplot+ table[x=pipe,y=tf_40t_80l,col sep=space]{Fig/graph_pipeline/pipe-scalability-memory.txt};
    \addplot+ table[x=pipe,y=tbb_40t_80l,col sep=space]{Fig/graph_pipeline/pipe-scalability-memory.txt};
    \legend{Pipeflow, oneTBB}
    \end{axis}
  \end{tikzpicture}
  \caption{Runtime and memory comparisons between Pipeflow and oneTBB to complete a design of 1.5M nodes
  and 3.5M edges with different numbers of serial stages.}
  \label{fig::graph-pipeline-pipe-scalability-80L}
\end{figure}

\subsection{VLSI Circuit Timing Analysis Algorithm}
\label{sec::graph_pipeline}

We applied Pipeflow to solve a VLSI static timing analysis (STA) problem. 
The goal is to analyze the timing landscape of a circuit design
and report critical paths that do not meet the given constraints (e.g., setup and hold).
As presented in Figure~\ref{fig::pipeline_sta},
modern STA engines leverage pipeline parallelism to speed up the timing propagations.
However, nearly all of them count on OpenMP-based loop parallelism 
with layer-by-layer synchronization~\cite{Huang_21_01}.
With Pipeflow, we can directly formulate the problem as a task-parallel pipeline 
to improve task asynchrony.
As the analysis complexity continues to increase,
more analysis tasks (e.g., RC, delay calculators, pessimism reduction)
are incorporated into each node in the STA graph.
These tasks can be encapsulated in a sequence of stage functions
to overlap in graph across parallel lines.
We modify a large circuit design of 1.5M nodes and 3.5M edges from~\cite{Huang_21_01} 
and study the performance under different stage counts.
Each node has a stage task to calculate delay values at a specific configuration
using 2D matrix multiplication.

Figure \ref{fig::graph-pipeline-pipe-scalability-80L} compares the runtime and
memory between Pipeflow and oneTBB up to 80 stages.
We use 80 parallel lines for all experiments; 
we do not observe much difference in other numbers of parallel lines as both Taskflow and oneTBB 
have an adaptive work-stealing strategy to balance the number of running threads
with dynamic task parallelism.
In general, Pipeflow outperforms oneTBB at large stage numbers.
For example, using 40 cores and 72 stages, Pipeflow is 10\% faster than oneTBB.
In terms of memory usage,
Pipeflow is always better than oneTBB regardless of the number of stages and cores.
Pipeflow consumes less memory than oneTBB does because
all stage tasks perform computations directly on a global graph data structure
captured in the pipe callable.
The data passing interface between successive stages in oneTBB thus 
becomes a significant yet unnecessary overhead.


\begin{figure}[!h]
  \centering
  \pgfplotsset{
    title style={font=\LARGE},
    label style={font=\LARGE},
  }
  \begin{tikzpicture}[scale=0.49]
    \begin{axis}[
      title=Corun (8 stages),
      ylabel=Throughput,
      xlabel=Number of Coruns,
      legend pos=north east,
    ]
    \addplot+ table[x=corun,y=tf,col sep=space]{Fig/graph_pipeline/throughput-8pipes.txt};
    \addplot+ table[x=corun,y=tbb,col sep=space]{Fig/graph_pipeline/throughput-8pipes.txt};
    \legend{Pipeflow, oneTBB}
    \end{axis}
  \end{tikzpicture}
  \begin{tikzpicture}[scale=0.49]
    \begin{axis}[
      title=Corun (80 Stages),
      ylabel=Throughput,
      xlabel=Number of Coruns,
      legend pos=north east,
    ]
    \addplot+ table[x=corun,y=tf,col sep=space]{Fig/graph_pipeline/throughput.txt};
    \addplot+ table[x=corun,y=tbb,col sep=space]{Fig/graph_pipeline/throughput.txt};
    \legend{Pipeflow, oneTBB}
    \end{axis}
  \end{tikzpicture}
  \caption{Throughput of corunning the STA program.}
  \label{fig::graph-pipeline-throughput}
\end{figure}

Next, we compare the throughput by corunning the same program up to 10 times.
Corunning an STA program is very common for reporting the timing data
of a design at different input library files~\cite{ABK_18_01}.
The effect of pipeline scheduling propagates
to all simultaneous processes.
Hence, throughput is a good measurement for the inter-operability of 
a pipeline-based STA algorithm.
We corun the same analysis program up to 10 processes that compete for 40 cores.
Again, we use the weighted speedup to measure the throughput.
Figure \ref{fig::graph-pipeline-throughput} plots the throughput across 10
coruns at 8 and 80 stages.
We can see that Pipeflow outperforms oneTBB at all coruns.
For instance, at 10 coruns, Pipeflow
is 1.7$\times$ and 1.2 $\times$ better than oneTBB with 8 and 80 stages, respectively.

\subsection{VLSI Detailed Placement Algorithm}

We applied Pipeflow to solve a VLSI detailed placement problem.
Detailed placement is a critical step in physical optimization.
The goal is to optimize the interconnect among millions of logic gates or \textit{instances}
for improved timing and power.
Connected instances are grouped to a \textit{net} with interconnect modeled in Manhattan distance.
We consider the detailed placement algorithm in DREAMPlace~\cite{DREAMPlace}, 
namely \textit{local reordering}.
The algorithm decides an optimal order of four consecutive instances
in a \textit{window} of a placement row that produces minimum interconnect wirelength.
We can parallelize the reordering algorithm using pipeline.
Each row corresponds to a parallel stage that finds the best ordering of cells in a window
from the top to the bottom.
The scheduling tokens sweep through all windows from the left to the right.
Figure~\ref{fig::pipeline_detailed_placement} illustrates the algorithm.

\begin{figure}[!h]
  \centering
  \centerline{\includegraphics[width=.8\columnwidth]{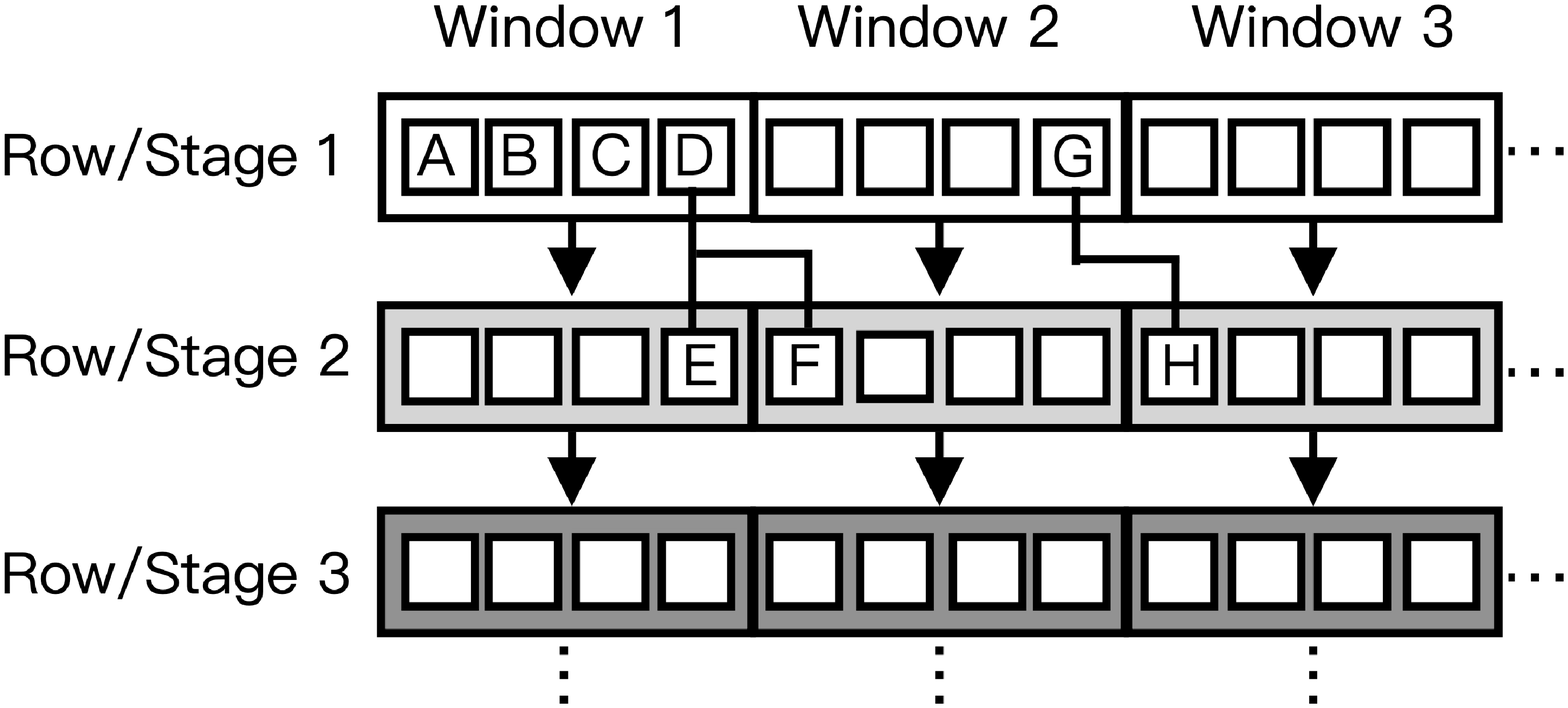}}
  \caption{
    Parallel local reordering algorithm using pipeline.
    Reordering cells in window 1 of row 1 (R1W1) cannot run in parallel
    with R2W1 due to the vertical dependency D--E.
    However, R1W1 can overlap with R2W2 despite D--F, 
    because the algorithm deals with Manhattan distance for wirelength.
    We can always assume that F is fixed on the right of window 1 within the view of R1W1.
    Thus, we update vertical windows in a linear pipeline, R1$\to$R2$\to$R3.
  }
  \label{fig::pipeline_detailed_placement}
\end{figure}

\begin{figure}[!h]
  \centering
  \pgfplotsset{
    title style={font=\LARGE},
    label style={font=\LARGE},
  }
  \begin{tikzpicture}[scale=0.49]
    \begin{axis}[
      title= Runtime (adaptec1),
      ylabel=Runtime (s),
      xlabel=Number of Cores,
      legend pos=north east,
      xtick={1, 8, 16, 24, 32, 40},
    ]
    \addplot+ table[x=cpus,y=tf-a1-rt,col sep=space]{Fig/dreamplace/scalability.txt};
    \addplot+ table[x=cpus,y=tbb-a1-rt,col sep=space]{Fig/dreamplace/scalability.txt};
    \legend{Pipeflow, oneTBB}
    \end{axis}
  \end{tikzpicture}
  \begin{tikzpicture}[scale=0.49]
    \begin{axis}[
      title= Memory (adaptec1),
      ylabel=Maximum RSS (MB),
      xlabel=Number of Cores,
      legend pos=north west,
      xtick={1, 8, 16, 24, 32, 40},
      y filter/.code={\pgfmathparse{#1/1000}\pgfmathresult}
    ]
    \addplot+ table[x=cpus,y=tf-a1-rss,col sep=space]{Fig/dreamplace/scalability.txt};
    \addplot+ table[x=cpus,y=tbb-a1-rss,col sep=space]{Fig/dreamplace/scalability.txt};
    \legend{Pipeflow, oneTBB}
    \end{axis}
  \end{tikzpicture}
  \begin{tikzpicture}[scale=0.49]
    \begin{axis}[
      title= Runtime (bigblue4),
      ylabel=Runtime (s),
      xlabel=Number of Cores,
      legend pos=north east,
      xtick={1, 8, 16, 24, 32, 40},
    ]
    \addplot+ table[x=cpus,y=tf-b4-rt,col sep=space]{Fig/dreamplace/scalability.txt};
    \addplot+ table[x=cpus,y=tbb-b4-rt,col sep=space]{Fig/dreamplace/scalability.txt};
    \legend{Pipeflow, oneTBB}
    \end{axis}
  \end{tikzpicture}
  \begin{tikzpicture}[scale=0.49]
    \begin{axis}[
      title= Memory (bigblue4),
      ylabel=Maximum RSS (GB),
      xlabel=Number of Cores,
      legend pos=north west,
      xtick={1, 8, 16, 24, 32, 40},
      ymax=1.64,
      y filter/.code={\pgfmathparse{#1/1000000}\pgfmathresult}
    ]
    \addplot+ table[x=cpus,y=tf-b4-rss,col sep=space]{Fig/dreamplace/scalability.txt};
    \addplot+ table[x=cpus,y=tbb-b4-rss,col sep=space]{Fig/dreamplace/scalability.txt};
    \legend{Pipeflow, oneTBB}
    \end{axis}
  \end{tikzpicture}
  \caption{Runtime and memory data of the pipeline-based placement algorithm for two industrial designs, 
  adaptec1 (890 stages) and bigblue4 (2694 stages).}
  \label{fig::dreamplace_scalability}
\end{figure}

Figure \ref{fig::dreamplace_scalability} compares the runtime and memory data 
between Pipeflow and oneTBB for two industrial designs, adaptec1 and bigblue4.
adaptec1 is a medium design with 211K instances and 221K nets (instance dependencies), 
defining 890 stages for 890 placement rows and 10692 windows.
bigblue4 is a large design with 2.1M instances and 2.2M nets, defining 2694 stages for 2694 placement rows and 32190 columns.
Both Pipeflow and oneTBB scale with increasing numbers of cores,
whereas Pipeflow always outperforms oneTBB 3--24\%.
For instance, in bigblue4 (40 cores), Pipeflow finishes the placement algorithm in 6131 seconds,
whereas oneTBB needs 8213 seconds.
Such improvement is significant as practical design closure process can
invoke millions of placement iterations to optimize the physical layout.
Likewise, Pipeflow outperforms oneTBB in memory usage.
The difference is slight (about 1\%) because most memory is taken
by the placement problem itself, including the data structure of rows and instances.

\subsection{Insight from the CAD Algorithm Developers}

As an experienced parallel CAD researcher,
Pipeflow has assisted us to overcome many programming challenges.
In the previous two experiments,
the data is explicitly managed by the application algorithms and 
does not go through any data abstraction layers of oneTBB.
The real need is a task-parallel pipeline programming framework that 1) gives us
full control over data and 2) 
allows us to probe each scheduled task.
For instance, when implementing the placement algorithm,
we capture the row data from a global database in each pipe callable and 
use the pipeflow variable to get the line numbers of a scheduled task to index 
its window locations.
However, oneTBB has abstracted these components out, and we have to implement
another mapping strategy to get these data from each filter.
Similar problems exist in other libraries too.

\section{Related Work}
\label{sec::related_work}

\paragraph{Pipeline programming models} have received intensive research interest.
Most of them are data-centric
using static template instantiation or dynamic runtime polymorphism
to model data processing in a pipeline.
To name a few popular examples: 
oneTBB~\cite{TBB} and TPL~\cite{TPL} require explicit definitions of input and output types
for each stage;
GrPPI~\cite{GrPPI} provides a composable abstraction for data- and stream-parallel 
patterns with a pluggable back-end to support task scheduling;
FastFlow~\cite{FastFlow} models parallel dataflow using pre-defined 
sequential and parallel building blocks;
TTG~\cite{TTG} focuses on dataflow programming using various template optimization techniques;
SPar~\cite{SPar, SPar2, SPar3, SPar4} analyzes annotated attributes
extracted from the data and stream parallelism domain,
and automatically generates parallel patterns defined in FastFlow;
Proteas~\cite{Proteas} introduces a programming model for directive-based parallelization
of linear pipeline;
~\cite{SPar-adaptive, Pattern-adaption} propose self-adaptive mechanism to decide the degree of parallelism
and generate the pattern compositions in FastFlow.
These programming models, however, constrain users to design pipeline algorithms
using their data models,
making it difficult to use especially 
for applications that only need pipeline scheduling atop custom data structures.

\paragraph{Existing pipeline scheduling algorithms} typically co-design
task scheduling and buffer structures to strive for the best performance.
For instance, oneTBB~\cite{TBB} defines a per-stage buffer counter to synchronize
data tokens among stages and lines, coupled with a small object allocator
to minimize the data allocation overhead;
GRAMPS~\cite{GRAMPS} designs a buffer manager with per-thread fix-sized memory pools to 
dynamically allocate new data and release used ones;
FastFlow~\cite{FastFlow} design a lock-free queue
with a mechanism to transfer data ownership between senders and receivers,
but this method can incur imbalanced load and requires non-trivial back-pressure management;
HPX~\cite{HPX} counts on a channel data structure and standard future objects
to pass data around tasks,
but the creation of share states becomes expensive when the pipeline is large;
Cilk-P~\cite{CilkP} employs per-stage queues coupled with 
two counter types to track static and dynamic dependencies
of each node, but it targets on-the-fly pipeline parallelism
which is orthogonal to our focus;
FDP~\cite{FDP} proposes a learning-based mechanism to 
adapt scheduling to an environment, but it requires expensive
runtime profiling that may not work well for highly irregular applications like CAD.
%
%
%
%
In terms of load balancing,
most pipeline schedulers leverage work stealing, 
which has been reported with better performance than static
policies~\cite{Perf-Eval, WorkStealPipeline, GRAMPS, CilkP, Cilk++, Cilk}.
However, for some special cases, such as fine-grained load-imbalanced pipelines,
static policies perform comparably.
For example, Pipelight~\cite{Pipelight} implements a load balancing technique
based on two static scheduling algorithms,
DSWP~\cite{DSWP, DSWP2, DSWP3} and LBPP~\cite{LBPP};
Pipelite~\cite{Pipelite} and URTS~\cite{URTS}
introduce dynamic schedulers using ticket-based synchronization and directive-based model language
for linearpipelines, respectively.
While co-designing task scheduling and buffer structures has certain advantages
for data-centric pipeline (e.g., data locality),
the cost of managing data can be significant yet unnecessary, especially
for applications that only exploit task parallelism in pipeline.


\section{Conclusion}
\label{sec::Conclusion}

We have introduced Pipeflow,
an efficient C++ pipeline programming framework atop
the Taskflow system.
We have designed a new task-parallel programming model that
separates data abstraction and task scheduling.
By focusing on the pipeline tasking,
we have introduced a simple yet efficient scheduling algorithm
based on Taskflow's work-stealing runtime with dynamic load balancing.
We have evaluated the performance of Pipeflow on micro-benchmarks
and real applications.
For example, 
Pipeflow outperforms oneTBB 24\% and 10\% faster in a
VLSI placement and a timing analysis workloads that adopt
pipeline parallelism to speed up runtimes, respectively.
%
Our future work plans to apply Pipeflow to more CAD applications and
bring interdisciplinary ideas to the HPC domain.
%




\Urlmuskip=0mu plus 1mu\relax
\renewcommand*{\bibfont}{\normalsize}
\bibliography{ms.bib}

%

\end{document}